\declaretheorem{theorem}
\theoremstyle{plain}
\newtheorem{lemma}[theorem]{Lemma}
\theoremstyle{definition}
\newtheorem{definition}[theorem]{Definition}
\newtheorem{remark}[theorem]{Remark}
\providecommand{\keywords}[1]{\textit{Keywords:} #1}
\title{An adjacency labeling scheme based on a tree-decomposition}
\date{}
\author{Avah Banerjee}
\newcommand{\FormatAuthor}[3]{
\begin{tabular}{c}
#1 \\ {\small\texttt{#2}} \\ {\small #3}
\end{tabular}
}
\author{
\begin{tabular}[h!]{lcr}
   \FormatAuthor{Avah Banerjee}{banerjeeav@mst.edu}{Missouri S\&T}
\end{tabular}
}
\begin{document}

\maketitle

\begin{abstract}
In this paper we look at the problem of adjacency labeling of graphs. Given a family of undirected graphs  the problem is to determine an encoding-decoding scheme for each member of the family such that we can decode the adjacency information of any pair of vertices only from their encoded labels. Further, we want the length of each label to be short (logarithmic in $n$, the number of vertices) and the encoding-decoding scheme to be computationally efficient. We proposed a simple tree-decomposition based encoding scheme and used it give an adjacency labeling of size $O(k \log k \log n)$-bits. Here $k$ is the clique-width of the graph family. We also extend the result to a certain family of $k$-probe graphs.

\keywords{Clique-widths  \and Hereditary Classes \and Implicit representation.}
\end{abstract}

\section{Introduction}\label{sec: intro}
Adjacency labeling is a method to store adjacency information implicitly within vertex labels such that we can determine the adjacency between two vertices just from their labels. 
To be useful in practice we want these labels to be compact and easy to encode-decode.
This is a powerful technique for lossless compression of graphs. 
Since the decoding is fully local, it makes these schemes particularly useful for storing graphs on  distributed systems.
It is an active area of research to determine adjacency labeling schemes for various graph families of practical importance.

\subsection{Preliminaries}
Let $G = (V, E)$ be an undirected graph with vertex set $V$ ($|V| = n$) and edge set $E$ ($|E| = m$). We assume $G$ has no self-loops or parallel edges. Let ${\cal F}_n$ be a family of graphs on the vertex set $V$ of size $n$. For any $u, v \in V$, we define $\mathsf{adj}(u,v) = 1$ if $\{u,v\} \in E$ and $0$ otherwise.

\begin{definition}[modified from \cite{alstrup2015adjacency}]\label{def: labeling}
An $L$-bit adjacency labeling scheme of a graph family ${\cal F}_n$ is a pair of functions $\mathsf{enc}: {\cal F}_n \to (V \to  \{0,1\}^L)$ and $\mathsf{dec}: \{0,1\}^L\times \{0,1\}^L \to \{0,1\}$ such that for all $G = (V,E) \in {\cal F}_n$ and for all $u,v \in V$, $$\mathsf{adj}(u,v) = \mathsf{dec}(\mathsf{enc}(G)(u),\mathsf{enc}(G)(v)).$$
We say there is an $L$-bit adjacency labeling for ${\cal F}_n$.
\end{definition}
We write $\mathsf{enc}(G)(u) = \mathsf{enc}(u)$ when the graph $G$ is clear from the context. According to above definition a labeling scheme is    local; as it determines the adjacency only based on the vertex labels.
\noindent For a labeling scheme $(\mathsf{enc},\mathsf{dec})$ to be useful in practice we want  both functions, $\mathsf{enc}$ and $\mathsf{dec}$, to be efficiently computable.
\noindent Here we use the qualifier ``adjacency" labeling to distinguish it from other types labeling schemes (see below). 
However, in their seminal paper, 
authors in \cite{kannan1992implicat}  referred to such a scheme simply as an $L$-labeling of $G$.  
\noindent In general the $(\mathsf{enc, dec})_{\mathsf P}$ pair may be used as an efficient  storage-retrieval scheme for ${\cal F}_n$ with respect to some predicate $\mathsf P$.
For example $\mathsf P$ could be the predicate that a triple of three vertices forms a triangle in $G$. Another example is the distance labeling problem \cite{alstrup2015sublinear} where given a pair of vertex labels the decoder outputs the shortest path distance between them.

In this paper we are only concerned with adjacency labeling.  
There is a simple yet beautiful connection between adjacency labeling and \emph{induced universal} graphs of a hereditary graph family.
\begin{definition}
A graph property $\cal P$ is said to be \emph{hereditary} if it is closed under taking induced subgraphs. 
\end{definition}
\begin{definition}\cite{alon2017asymptotically,alstrup2017optimal,abrahamsen2016near}
A graph $G_{\cal P}$ of size $f(n)$ (for some time-constructible\footnote{$f(n)$ can be computed in time $O(f(n))$.} function $f: \mathbb{N} \to \mathbb{N}$) is called universal for $\cal P$ if every graph $G \in {\cal P}$ with at most $n$ vertices is an induced subgraph of $G_{\cal P}$. 
\end{definition}
 
\noindent An adjacency labeling for a hereditary family is said to be \emph{efficient} if $k = O(\log n)$. It is an easy exercise to note that  having an efficient adjacency labeling  for a hereditary family implies that there is a induced universal graph $G_{\cal P}$ with $O(n^{O(1)})$ vertices. 
In this paper we give  an adjacency labeling for graphs parameterized over its \emph{clique-width}.
Upto a constant factor, this scheme is efficient for graphs of bounded clique-width.

\begin{definition}\cite{courcelle1993handle}\label{def: clique-width}
The \emph{clique-width} (denoted by $cw(G)$) of a graph $G$ is the minimum number of labels (of vertices) to construct $G$ using the following four operations:
\begin{enumerate}
    \item Create a vertex in $v$ with label $i$ (denoted by $(v,i)$)
    \item Disjoint union $G_1\oplus G_2$    \footnote{ The vertex set of $V(G_1\oplus G_2)$ of $G_1\oplus G_2$ is $V(G_1) \cup V(G_2)$ and the edge set $E(G_1\oplus G_2) = E(G_1)\cup E(G_2)$} of two labeled graphs $G_1$ and $G_2$
    \item  Join operation $\eta_{i,j} :$ adds edges between every pair of vertices one with label $i$ and another with label $j$ ($i \ne j$)
    \item Relabel operation $\rho_{i\to j}$ relabels vertices having label $i$ with label $j$
\end{enumerate}
\end{definition}
\noindent A construction of $G$ using the above operations is known as a $k$-expression where $cw(G) = k$.
A $k$-expression can be equivalently represented as a rooted binary tree\footnote{It is a tree and not a DAG as the same graph does not take part in two separate union operation.} $T$ (called a union tree \cite{kamali2018compact}) as follows. 
Leaves of $T$ corresponds to the labeled (with their initial labels) vertices $(v,i)$'s of $G$.
Each internal node correspond to a union operation.
Lastly, each internal node is decorated with a (possibly empty) sequence of join and relabel operations. We use the notation $d_z$ to denote the  decorator for the node $z$.

 \begin{figure}[h]
	\includegraphics[width=12cm]{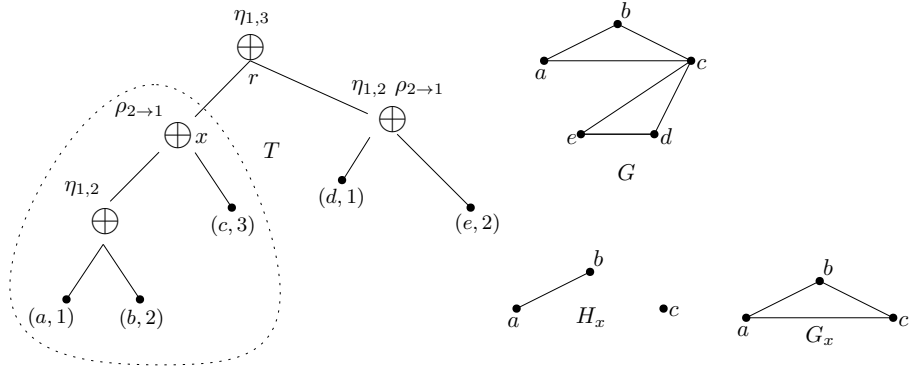}
	\centering
	\caption{$T$ is a union tree of the graph $G$. However, $T$ is not a proper union tree. The induced subgraphs $G_x = G[\{a,b,c\}]$ has edges $ac$ and $bc$ but $H_x$, the graph corresponding to the subtree $T_x$ rooted at $x$, has no such edges. }
\label{fig: proper}
\end{figure}

\noindent We say $k$ is the \emph{width} of $T$. 
 For some internal vertex $x$ of $T$ let $T_x$ be the subtree rooted at $x$. Let $G_x$ be the induced subgraph of $G$ determined by the leaves of $T_x$. 
Then $T_x$  (including any join or relabel operations in $d_x$) is a union tree for some spanning subgraph\footnote{$H$ is a spanning subgraph of $G$ if $V(H)=V(G)$ and $E(H) \subseteq E(G)$.} $H_x$ of $G_x$. Borrowing the terminology from \cite{kamali2018compact} we say $T$ is a \emph{proper} union tree of $G$ if for every internal vertex $x \in T$, $H_x = G_x$ (see example in Fig. \ref{fig: proper}).
It is an easy exercise (see lemma 1 in \cite{kamali2018compact}) to show that we can  transform any union tree in linear time to a proper one of the same width representing the same graph.
Henceforth we shall assume without loss of generality that we are working with proper union trees.

In this paper, we also look at a generalization of $k$-expressions and study adjacency labeling of the corresponding graph family. Recently, a new width parameter was proposed \cite{hung2010some,chandler2008partitioned}. 
 
 \begin{figure}[h]
	\includegraphics[width=12cm]{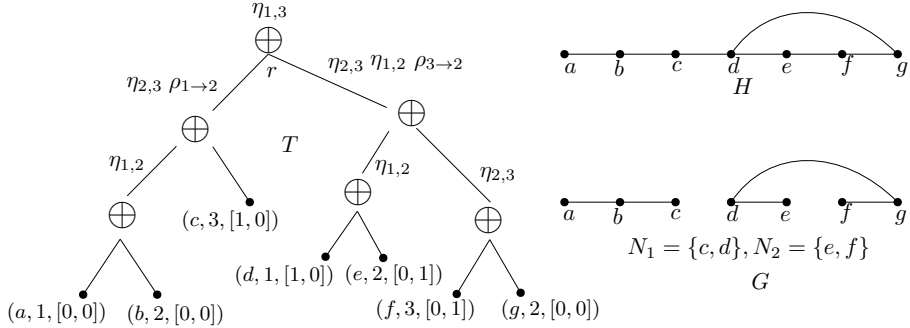}
	\centering
	\caption{Top right graph $H$ corresponds to the $k$-expression $t = \eta_{1,3}(t_1 \oplus t_2)$ where
	$t_1 = \rho_{1\to 2}\eta_{2,3}(\eta_{1,2}((a,1)\oplus(b,2))\oplus (c,3))$ and $t_2 = \rho_{3\to 2}\eta_{1,2}\eta_{2,3}((\eta_{1,2}((d,1)\oplus(e,2)))\oplus(\eta_{2,3}((f,3)\oplus(g,2))))$. $G$ can be embedded into $H$ using two independent sets $N_1, N_2$ as illustrated by the union tree $T$ of $G$.}
\label{fig: probe ex}
\end{figure}
\begin{definition}(from \cite{chang2011block})\label{def: f width}
Let ${\cal F}$ be a family of graphs. The $\cal F$-width of a graph $G$ is the minimum number $k$ of independent sets $N_1,\ldots,N_{k}$ in $G = (V,E)$ such that there exists $H = (V, E') \in {\cal F}$ where the following holds: 1)  $G$ is a spanning subgraph of $H$ and 2) for every edge $(u,v) \in E'\setminus E$  there exists an $i \in[k]$\footnote{Here $[n] = \{1,\ldots,n\}$.} with $u,v \in N_i$.
\end{definition}

\noindent A graph which has an $\cal F$-width of $k$ is known as a $k$-\emph{probe} $\cal F$-graph\footnote{Some authors call them probe-$k$ $\cal F$-graph\cite{chang2011block}}.
In this paper we consider the adjacency labeling of $w_k$-\emph{probe} ${\cal C}_k$- graphs. Here ${\cal C}_k$ is the family of graphs with clique-width $\le k$. 
We can represent a tree decomposition of 
$w_k$-\emph{probe} ${\cal C}_k$- graphs via a minor modification to the proper union tree  structure (Fig. \ref{fig: probe ex}).
The labels of each leaf now has an additional $w_k$-length binary vector. Specifically, each leaf corresponds to a tuple $(u, i, M_u)$ where $M_u[j] = 1 \iff u \in N_j$ and $i$ is $u$'s initial label in the $k$-expression (as before).
Adjacency is determined as follows (using definition \ref{def: clique-width} and \ref{def: f width}). Let $z = lca(u, v)$. 
Then $u, v$ are adjacent if and only if : 1) there is a join operation in $d_z$ between the current labels of $u$ and $v$ and 2)
 $M_u$ and $M_v$ do not have a common 1. 
 It should be noted that a $w_k$-\emph{probe} ${\cal C}_k$- graph has a clique-width $\le k2^{w_k}$. 

The motivation for studying  adjacency labelling of $k$-probe $\cal F$-graphs are threefold. Firstly, they are a generalization of probe-graphs \cite{chang2011block},  which can model some natural problems. For example a type of DNA mapping problem can be formulated as a  recognition problem for probe-graphs of intervals \cite{chandler2009probe}.
Secondly, this family of graphs do not have a bounded genus. This may make finding an compact adjacency labeling  challenging; especially if $\cal F$ does not posses tree decomposition (like a union tree). Existing approaches such as those developed in \cite{dujmovic2020adjacency} does not extended to graph families whose genus is not bounded. We leave this as an open problem.
Finally, depending on $\cal F$, many computationally hard problems exhibit efficient algorithms when the ${\cal F}$-width is bounded. For example, the recognition problem for graphs of bounded ${\cal B}$-width is fixed parameter tractable \cite{chang2011block}. Here $\cal B$ is the class of block graphs \cite{brandstadt1999graph}. A graph is a block graph if it is chordal and has no induced subgraph isomorphic to the diamond graph ($K_4-e$).

\subsection{Summary of Our Results}
Our main result is as follows.
\begin{restatable}{thm}{gentree}
\label{thm: clique width tree}
Suppose ${\cal C}_{k,n}$ is a family of graphs with $n$-vertices having a clique-width at most $k$. Then  ${\cal C}_{k,n}$ has an adjacency labeling scheme of size $O(k \log k \log n)$. If $k$ is bounded the above result is optimal upto a constant factor.
\end{restatable}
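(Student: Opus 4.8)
The plan is to exploit the adjacency rule of a proper union tree — for leaves $u,v$ with $z=\mathrm{lca}(u,v)$, the edge $uv$ exists iff a join in $d_z$ connects the current labels of $u$ and $v$ — and to turn it into short labels by two reductions: first make the union tree \emph{shallow} so that each vertex has only $O(\log n)$ relevant ancestors, and then store at each such ancestor a compact description of the join behaviour there. Starting from a proper union tree $T$ of width $k$ for $G$, I would rebalance it, via a centroid/tree-contraction style argument, into an equivalent union tree $T'$ representing the same $G$ but of depth $O(\log n)$. Since a union tree is just a binary tree whose internal nodes carry decorators $d_z$, the balancing is combinatorial; its price is an increase of the width from $k$ to some $k'=O(k\log k)$, and this is exactly where the extra $\log k$ factor in the bound is meant to come from.

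Using $T'$, I would label each leaf $u$ with: (i) the root-to-$u$ path written as an $O(\log n)$-bit string of left/right turns (depth is $O(\log n)$, so this encodes $u$ together with all of its ancestors); and (ii) for each ancestor $z$ of $u$, the current label $\alpha_z(u)\in[k']$ that $u$ carries as it enters the union at $z$, together with the \emph{cross-join row} $R_z(u)$, the $k'$-bit vector with $R_z(u)[b]=1$ iff a vertex entering $z$ from the sibling side with current label $b$ becomes adjacent, under the operations of $d_z$, to a vertex entering with label $\alpha_z(u)$. This is $O(\log k'+k')=O(k\log k)$ bits per ancestor over $O(\log n)$ ancestors, hence $O(k\log k\log n)$ bits in total. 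To decode $\mathsf{adj}(u,v)$, the decoder computes $z=\mathrm{lca}(u,v)$ as the longest common prefix of the two root-paths, reads the next turn-bits to learn which of $u,v$ sits in the left subtree of $z$ (say $u$), and — because $T'$ is proper and $z$ is least — knows the edge $uv$ is created precisely inside $d_z$ and depends only on $\alpha_z(u),\alpha_z(v)$ and $d_z$. It therefore outputs $R_z(u)[\alpha_z(v)]$, both quantities being stored in the two labels at depth equal to the LCA depth; correctness is exactly the stated adjacency rule.

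For optimality when $k$ is bounded I would use a counting lower bound. Cluster graphs (disjoint unions of cliques) are cographs, so they have clique-width $\le 2$, and there is one for every partition of the $n$ vertices; hence $|{\cal C}_{2,n}|\ge B_n$, the $n$th Bell number, with $\log_2 B_n=\Theta(n\log n)$. On the other hand, any $L$-bit scheme must assign distinct graphs on the same vertex set distinct encoding functions $V\to\{0,1\}^L$, since otherwise $\mathsf{dec}$ would force their adjacencies to agree; thus $|{\cal C}_{2,n}|\le (2^L)^n=2^{Ln}$. Combining the two bounds gives $L=\Omega(\log n)$, which matches $O(k\log k\log n)=O(\log n)$ for constant $k$.

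The step I expect to be the main obstacle is the depth reduction: I need an equivalent union tree of depth $O(\log n)$ whose width stays $O(k\log k)$ (a larger blow-up would weaken the bound), and I must check that the decorators $d_z$ of the rebalanced tree remain well defined so that the cross-join rows $R_z$ make sense. An alternative that sidesteps balancing is a heavy-path decomposition of $T$, since every root-to-leaf path meets only $O(\log n)$ heavy paths; but there the difficulty migrates to compressing, into $O(k\log k)$ bits per heavy path, the entire label trajectory and join behaviour of the heavy-side vertex along a path that may be long and off which many light subtrees — hence many candidate LCAs — hang. I expect that this compression, or equivalently the controlled-width depth reduction, is the true technical heart of the argument.
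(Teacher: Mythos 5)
Your route differs fundamentally from the paper's, and the step you yourself flag as the ``main obstacle'' is a genuine gap, not a technicality. You assume that a proper union tree of width $k$ can be rebalanced into an \emph{equivalent} union tree of depth $O(\log n)$ whose width grows only to $k'=O(k\log k)$, and the entire label-size bound is reverse-engineered from that assumption. No argument is given for why the blow-up would be $O(\log k)$ rather than, say, polynomial or exponential in $k$; the balancing results in the literature that the paper cites (e.g.\ the Courcelle--Vanicat query-implementation paper, and the OBDD-based constructions) either do not control the width this tightly or pay for depth reduction elsewhere. Without a proof of this depth-reduction lemma the argument does not close, and proving it would itself be the technical heart of a different paper. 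Everything downstream of it (turn-bit path encoding, per-ancestor label $\alpha_z(u)$ and cross-join row $R_z(u)$, LCA by longest common prefix) is fine, and your counting lower bound via cluster graphs and Bell numbers is correct (indeed it supplies a proof of the optimality claim that the paper only asserts).

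The paper avoids rebalancing entirely: the union tree $T$ and its decorators are left semantically intact, and only the \emph{indexing} is made logarithmic. It proves (Theorem \ref{thm: cater}) that any tree with $n$ leaves decomposes into $O(\log n)$ levels of caterpillars --- an $r$-path $P$ with subtrees of at most $n/2$ leaves hanging off it --- and then, per level, stores for each leaf $u$ a single $O(k\log k)$-bit record $A_x(u)=(C_x(u),F_x(u),c_x(u))$ at $u$'s attachment point $x$ on $P$ (Lemma \ref{lmm: node encoding}): $u$'s current label at $x$, the set of labels joined to it by $d_x$, and a $k$-entry relabeling table for the sibling subtree. The decoder never computes the LCA explicitly; it scans for the first level at which the two path indices differ and uses the record of the vertex attached higher on $P$. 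This is essentially the heavy-path alternative you sketch and then set aside: the compression difficulty you anticipate there (many candidate LCAs along a long path) dissolves because one record per vertex per level suffices --- queries against vertices attached lower on the path are answered from $u$'s record, queries against vertices attached higher are answered from the other vertex's record, and queries within the same subtree recurse to the next level. If you want to salvage your write-up along the paper's lines, you should also be careful with the analogue of $F_x(u)$: your choice to store $\alpha_z(v)$ in $v$'s own label for each of its ancestors is actually cleaner than the paper's per-$u$ relabeling table, which implicitly assumes the current label at $x$ of a sibling-subtree vertex is a function of its initial label alone.
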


\noindent 
\noindent Briefly, we apply a recursive transformation on the union tree to obtain a tree of $O(\log n)$ depth. This transformation preserves the lowest common ancestor relations between the leaves and allows us to encode the adjacency information contained within the internal nodes and the leaves with $O(k\log k)$ bits. We also get the following generalization as a corollary.
\begin{restatable}{cor}{probe}
There is an $O(k\log k\log n + w_k)$-bits labeling scheme for $w_k$-\emph{probe} ${\cal C}_k$- graphs of size $n$.
\end{restatable}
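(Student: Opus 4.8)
The plan is to derive the corollary directly from Theorem~\ref{thm: clique width tree} by exploiting the fact that the adjacency rule for a $w_k$-\emph{probe} ${\cal C}_k$-graph factors into two independent tests. Recall from Definition~\ref{def: f width} that such a graph $G$ comes equipped with a witness graph $H = (V, E')$ of clique-width at most $k$ (so $H \in {\cal C}_{k,n}$) together with independent sets $N_1, \ldots, N_{w_k}$ in $G$, and that $u, v$ are adjacent in $G$ if and only if (1) there is a join operation in $d_z$ between the current labels of $u$ and $v$, where $z = lca(u,v)$ in the union tree (equivalently, $\{u,v\} \in E'$), and (2) the membership vectors $M_u, M_v \in \{0,1\}^{w_k}$ share no common $1$. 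The first test depends only on $H$, while the second is a purely local comparison of two short vectors; this clean separation is exactly what makes the reduction go through.

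First, I would apply Theorem~\ref{thm: clique width tree} to the witness graph $H$, since $cw(H) \le k$, to obtain an adjacency labeling $(\mathsf{enc}_H, \mathsf{dec}_H)$ of size $O(k \log k \log n)$ for which $\mathsf{dec}_H(\mathsf{enc}_H(u), \mathsf{enc}_H(v)) = 1$ iff $\{u,v\} \in E'$. The only point to verify here is that the recursive union-tree transformation underlying the theorem is oblivious to the extra leaf decorations $M_u$ of the probe representation: since the tree decomposition of $G$ is precisely the proper union tree of $H$ with each leaf $(u,i)$ augmented to $(u,i,M_u)$, stripping the vectors $M_u$ recovers a genuine proper union tree of $H$, and the theorem applies verbatim to the $H$-part.

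Second, I would form the label of each vertex by concatenation, $\mathsf{enc}(u) = (\mathsf{enc}_H(u), M_u)$, of length $O(k \log k \log n) + w_k = O(k \log k \log n + w_k)$ bits. The decoder parses each label into its two components and returns
\[
\mathsf{dec}\big((\ell_u, M_u), (\ell_v, M_v)\big) \;=\; \mathsf{dec}_H(\ell_u, \ell_v) \,\wedge\, \big[\, M_u \wedge M_v = \mathbf{0}\,\big],
\]
where the bracketed test evaluates to $1$ exactly when the bitwise \textsc{and} of the two vectors is zero. By the two-part adjacency characterization above, this equals $\mathsf{adj}(u,v)$, establishing correctness, and both $\mathsf{enc}$ and $\mathsf{dec}$ remain efficiently computable since they only add the trivial vector comparison to the scheme of Theorem~\ref{thm: clique width tree}.

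The construction presents no real obstacle beyond bookkeeping; the only step that genuinely invokes Theorem~\ref{thm: clique width tree} is the encoding of adjacency in $H$, and the probe structure contributes only the additive $w_k$ term. I therefore expect the one place requiring care to be confirming that the witness graph $H$ and its proper union tree can be extracted from the probe representation without loss, so that the size bound of the theorem is inherited with no hidden dependence on $w_k$ inside the first summand.
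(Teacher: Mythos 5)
Your proposal is correct and follows essentially the same route as the paper: the paper's proof is a one-line appeal to Theorem~\ref{thm: clique width tree} plus the observation that an additional $w_k$ bits suffice to store the vectors $M_u$, and your argument is simply the fully spelled-out version of that (label the witness graph $H$ via the theorem, concatenate $M_u$, and decode by conjoining the $H$-adjacency test with the disjointness test on the membership vectors). The extra care you take in checking that the union tree of $H$ is recovered by stripping the $M_u$ decorations is a reasonable detail to make explicit, but it does not change the argument.
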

\label{cor: gen tree}

\begin{proof}
This immediately follows from Theorem \ref{thm: clique width tree} and the fact that we need an additional $w_k$-bits to encode the vectors $M_u$.
\end{proof}



\subsection{Previous and Related Work}
Adjacency labeling schemes studied in this paper closely follows the paradigm introduced in \cite{kannan1992implicat,muller1989local}. However, the study of adjacency labeling schemes goes back more than half a century \cite{breuer1966coding,breuer1967unexpected}.
Since then many results have been discovered for a wide variety of graph classes. 
A comprehensive overview  and some interesting open problems  can be found in \cite{spinradefficient,scheinerman2016efficient}  and the references therein.
So we restrict our discussion to results which are closely related to ours.
A folklore result\footnote{We thank an anonymous reviewer for pointing this out.} is that cographs have $O(\log n)$-bit adjacency labeling. This follows from the fact that a cograph is a permutation graphs and for which an adjacency labeling follows trivially (for each vertex store $(i, \pi(i))$)\cite{spinradefficient}. In \cite{gavoille2007shorter} authors gave a $(\log n + O(k \log\log \frac{n}{k}))$-bits adjacency labeling scheme for graphs of tree-width $k$. 

Only a handful of results are known with respect to the clique-width parameter.
There is a parallel line of research based on \emph{ordered binary decision diagrams} (OBDD). OBDD's are a generalization of union trees in the setting of boolean functions. In \cite{meer2009obdd} authors gave a $O(n\frac{k^2}{\log k})$-sized, $O(\log n)$-depth OBDD with an encoding size of $O(\log k \log n)$-bits. This scheme is based on a bottom up tree decomposition approach originally introduced in \cite{miller1985parallel}. In contrast our decomposition scheme is top-down.
An improvement was proposed based on a tree-decomposition approach similar to ours\cite{kamali2018compact}. Here the author gave a $O(kn)$-sized data structure that supported $O(1)$-time adjacency quires.
A more recent result on OBDD type storage scheme for small clique-width graph can be found in \cite{chakraborty2021succinct}. 
However, these representations are not local and the adjacency queries are performed with the help of a global data structure (the OBDD or something similar).
The result closest to  ours can be found in \cite{courcelle2003query,spinradefficient}. The first paper uses the language of \emph{monodic second-order} logic. There, authors gave an adjacency labeling scheme, which in the language of this paper, translates to a label of size $O(f(k)\log n)$ bits. In their paper authors did not give an explicit expression for $f(k)$. 
In \cite{spinradefficient} (chapter 11) the author hinted at an $O(\log n)$ adjacency labeling for graphs with bounded clique-width. The proposal uses a recursive decomposition by successively finding balanced $k$-modules for any graph with clique-width $k$. Although explicit bounds were not provided with respect to $k$, we expect that working out the details can give a bound similar to ours.

\section{A caterpillar-type balanced decomposition}
In this section we give a simple balanced decomposition (discussed shortly) of a rooted tree (not necessarily binary).  Here we work with a generic rooted tree $T$ having $n$ leaves (hence $\le n-1$ internal nodes).
Later in section \ref{sec: main} we will use this result to prove theorem \ref{thm: clique width tree}.

 It is clear that every tree can be constructed starting from $K_1$ by repeatedly adding pendent edges. However, it may require $O(n)$ iterations to construct a tree with $n$ vertices. We show that each tree on $n$ leaves can be constructed within $O(\log n)$ iterations using a slightly more relaxed operation. We assume each non-root vertex $v$ has either  no children or at least two children (that is, $v$ does  not have exactly one child). Let $\mathcal L_n$ denote the set of all trees with at most $n$ leaves. See Fig. \ref{fig: trees l3} for all trees in $\mathcal L_3$, where the root is colored red. 
 \begin{figure}[h]
	\includegraphics[width=8cm]{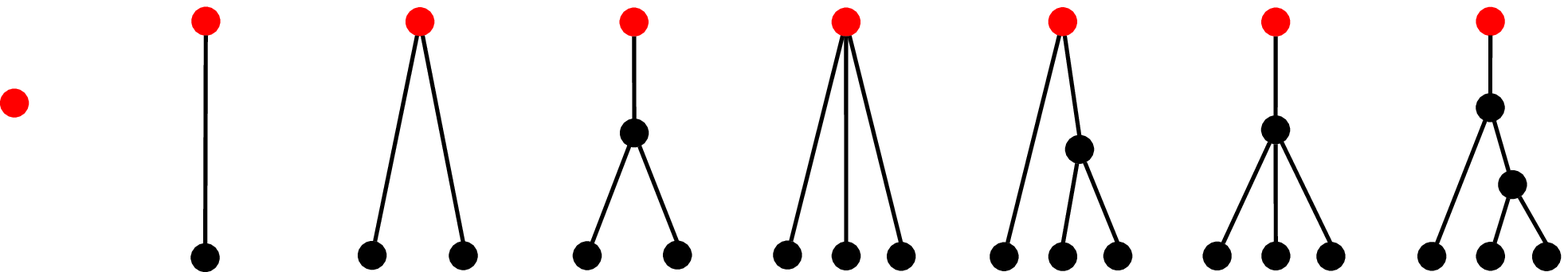}
	\centering
	\caption{Trees in ${\cal L}_3$. }
\label{fig: trees l3}
\end{figure}
 \begin{definition}
 A {\it caterpillar} (see Fig. \ref{fig:caterpillar}) is a tree for which there exists a root-leaf path $P$ such that all vertices outside $P$ are leaves. 
 \end{definition}
\begin{figure}[h]
	\includegraphics[width=4cm]{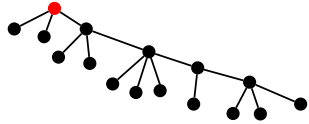}
	\centering
	\caption{A rooted caterpillar tree.}
\label{fig:caterpillar}
\end{figure}
\noindent Let $T_0,T_1,...,T_k$ be disjoint trees. The operation of {\it adding} $T_1,...,T_k$ to $T_0$ creates a tree obtained by identifying roots of $T_1,...,T_k$ with $k$ distinct leaves of $T_0$, respectively (see Fig. \ref{fig: adding}).
\begin{figure}[h]
	\includegraphics[width=9cm]{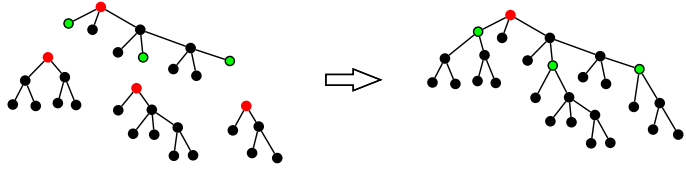}
	\centering
	\caption{Adding the three bottom trees with red roots by gluing them with the green ones of the top tree.}
\label{fig: adding}
\end{figure}
\noindent
Let $\mathcal C_0$ denote the class of all caterpillars. For each positive integer $p$, let $\mathcal C_p$ consist of trees obtained by adding trees from $\mathcal C_{p-1}$ to trees from $\mathcal C_0$. Note that $\mathcal C_{p-1}\subseteq \mathcal C_p$ since $K_1\in \mathcal C_0$.
A path $P$ of a tree $T$ is called an $r$-{\it path} if $r$ is an end of $P$. 
Let $P$ be an $r$-path of a tree $T$.
Suppose $T\ne P$. Let $T\setminus P$ denote the graph obtained by deleting $V(P)$ from $T$. Then each component of $T\setminus P$ must be one of the following two types: those that have no edges, which we call {\it trivial}, and those that have at least one edge, which we call {\it nontrivial}.
Let $T_0$ consist of all edges that are incident with at least one vertex of $P$. Then $T_0$ is a caterpillar (with root $r$). 
Suppose $T\ne T_0$. Then at least one component of $T\setminus P$ is nontrivial. 
Let $T_1,...,T_k$ be all such components. Then,
\begin{enumerate}[(i)]
    \item the root of $T_i$ is the vertex of $T_i$ that is closest to $r$ (in $T$).
    \item $E(T_0),E(T_1),\ldots,E(T_k)$ form a partition of $E(T)$
    \item $T$ can be obtained by adding $T_1,\ldots,T_k$ to $T_0$.
\end{enumerate}
The following theorem gives a structural relationship between ${\cal L}_n$ and ${\cal C}_p$.

\begin{restatable}{thm}{cater}
\label{thm: cater}
For every integer $n\ge 1$, we have $\mathcal L_n\subseteq \mathcal C_p$, where $p=\lfloor \log_2 n\rfloor$
\end{restatable}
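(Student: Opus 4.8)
The plan is to prove the statement by strong induction on $n$, reusing the caterpillar-plus-adding decomposition set up just before the statement and choosing the $r$-path $P$ to be a \emph{heavy path}. Concretely, for a vertex $x$ write $\ell(x)$ for the number of leaves of $T$ contained in the subtree $T_x$. Rooting $T$ at $r$, I would take $P = x_0 x_1 x_2 \cdots$ with $x_0 = r$ and each $x_{i+1}$ chosen to be a child of $x_i$ maximizing $\ell(\cdot)$, stopping once a leaf is reached. With this $P$, items (i)--(iii) preceding the statement already tell me that $T_0$ (all edges meeting $P$) is a caterpillar, that the nontrivial components $T_1,\dots,T_k$ of $T\setminus P$ are rooted at off-path children of vertices of $P$, and that $T$ is obtained by adding $T_1,\dots,T_k$ to $T_0$. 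So it suffices to place each $T_j$ in $\mathcal C_{p-1}$, which I will get from the inductive hypothesis once I bound the number of leaves of $T_j$.

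The base case is $n=1$, where $p=0$: a tree with at most one leaf in which every non-root vertex has $0$ or $\ge 2$ children cannot branch, so it is a path (possibly $K_1$), which is a caterpillar and hence lies in $\mathcal C_0$. For the inductive step the crucial claim is the \emph{halving lemma}: every nontrivial component $T_j$ satisfies $\ell(c_j)\le \lfloor n/2\rfloor$, where $c_j$ is its root. To see this, let $x_i\in P$ be the parent of $c_j$; since $c_j$ is off $P$ while the on-path child $x_{i+1}$ was chosen to maximize the leaf count, we have $\ell(x_{i+1})\ge \ell(c_j)$. As $x_i$ is not a leaf, its leaves split over its children, giving
\[
\ell(x_i)=\sum_{c\text{ child of }x_i}\ell(c)\ \ge\ \ell(x_{i+1})+\ell(c_j)\ \ge\ 2\,\ell(c_j),
\]
and since $\ell(x_i)\le \ell(r)\le n$ we conclude $\ell(c_j)\le n/2$, hence $\le \lfloor n/2\rfloor$ by integrality.

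With the halving lemma in hand the induction closes quickly. Each $T_j$ is a tree with at most $\lfloor n/2\rfloor$ leaves whose non-root vertices are unchanged from $T$ and hence still obey the ``$0$ or $\ge 2$ children'' convention, so $T_j\in\mathcal L_{\lfloor n/2\rfloor}$. The inductive hypothesis then places $T_j$ in $\mathcal C_q$ with $q=\lfloor \log_2\lfloor n/2\rfloor\rfloor$, and the elementary floor identity $\lfloor \log_2\lfloor n/2\rfloor\rfloor\le \lfloor\log_2 n\rfloor-1=p-1$ together with the nesting $\mathcal C_q\subseteq\mathcal C_{p-1}$ gives $T_j\in\mathcal C_{p-1}$. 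Since $T$ is obtained by adding $T_1,\dots,T_k\in\mathcal C_{p-1}$ to the caterpillar $T_0\in\mathcal C_0$, the definition of $\mathcal C_p$ yields $T\in\mathcal C_p$; the degenerate case where $T$ is already a caterpillar (no nontrivial components) gives $T\in\mathcal C_0\subseteq\mathcal C_p$ directly. I expect the halving lemma to be the only real obstacle — in particular verifying that the heavy-path choice, rather than an arbitrary $r$-path, is what forces each off-path subtree below half the leaf count — while the surrounding bookkeeping (base case, floor arithmetic, inheritance of the degree convention) is routine.
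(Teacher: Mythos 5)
Your proof is correct, and the outer induction is exactly the paper's: decompose $T$ along an $r$-path $P$ into the caterpillar $T_0$ plus the nontrivial components of $T\setminus P$, bound each component by $n/2$ leaves, apply the inductive hypothesis with the floor identity $\lfloor\log_2\lfloor n/2\rfloor\rfloor\le\lfloor\log_2 n\rfloor-1$, and invoke the adding operation. Where you genuinely diverge is in how you obtain the halving property (the paper's Lemma \ref{lmm: cater}). The paper argues extremally: it picks $P$ minimizing $h(P)=\max\{t : T\setminus P \text{ has a component with } t \text{ leaves}\}$ and derives a contradiction by rerouting to the root of an oversized component, using the ``no non-root vertex has exactly one child'' convention to show the rerouted path strictly decreases $h$. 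You instead take $P$ to be the heavy path (always descend into the child with the most leaves), and the bound $\ell(x_i)\ge\ell(x_{i+1})+\ell(c_j)\ge 2\ell(c_j)$ gives the halving directly; the degree convention is not even needed for this step, only for inheriting membership in $\mathcal L_{\lfloor n/2\rfloor}$. Your version is constructive and in fact matches the paper's own Remark on computing the decomposition in linear time via heavy-light decomposition, so arguably it unifies the existence proof with the algorithm; the paper's extremal argument proves the slightly different statement that the \emph{optimal} $r$-path already achieves the bound. Both are sound, and your counting step and floor arithmetic check out.
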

\noindent To prove the theorem we use the following lemma.
\begin{lemma}\label{lmm: cater}
 Let $T$ be a tree with $n\ge1$ leaves. Then $T$ has an $r$-path $P$ such that each component of $G\setminus P$ has at most $n/2$ leaves.
\end{lemma}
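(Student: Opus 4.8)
The plan is to build $P$ by a greedy \emph{heavy-path} descent from the root $r$, always following the child whose subtree contains the most leaves, and then to check that every subtree hanging off this path is light. For each vertex $v$ let $\ell(v)$ denote the number of leaves of $T$ contained in the subtree $T_v$ rooted at $v$; thus $\ell(r)=n$, and for every non-leaf $v$ we have $\ell(v)=\sum_{u}\ell(u)$, the sum ranging over the children $u$ of $v$, since the leaf sets of the children's subtrees partition the leaf set of $T_v$.

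First I would construct $P$ as follows: start at $r$, and while the current vertex $v$ is not a leaf, move to a child $c$ maximizing $\ell(c)$ over the children of $v$ (ties broken arbitrarily), then repeat. Each step strictly increases the depth and $T$ is finite, so the walk terminates at a leaf; this produces an $r$-path $P$ (ending at a leaf). Next I would analyze the components of $T\setminus P$. Each such component is exactly a subtree $T_w$, where $w$ is a child of some path vertex $v$ with $w\notin P$ (an \emph{off-path} child): once $P$ leaves $v$ it descends into the chosen child $c$, so no vertex of $T_w$ lies on $P$, and hence $T_w$ survives intact in $T\setminus P$ and contributes precisely $\ell(w)$ leaves.

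It then remains to bound $\ell(w)$. Fix such a $v$ and off-path child $w$, and let $c$ be the path-child of $v$. By the heavy-child rule $\ell(c)\geq\ell(w)$, and since $c$ and $w$ are distinct children of $v$ their subtrees have disjoint leaf sets, so $\ell(v)\geq\ell(c)+\ell(w)\geq 2\ell(w)$. Because $v$ lies on a path descending from $r$, we have $T_v\subseteq T$ and so $\ell(v)\leq\ell(r)=n$, giving $\ell(w)\leq\ell(v)/2\leq n/2$. As every component of $T\setminus P$ is some $T_w$ of this form, each has at most $n/2$ leaves, which is the claim.

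The argument is conceptually light; the main obstacle will be the careful bookkeeping that identifies each component of $T\setminus P$ with an off-path subtree $T_w$ and confirms its leaf count is exactly $\ell(w)$, together with the degenerate cases (for instance $n=1$, where $P=\{r\}$ and there are no components, or path vertices having a unique child, where no component hangs off and the bound holds vacuously).
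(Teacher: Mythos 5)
Your proof is correct, but it takes a genuinely different route from the paper's. The paper argues extremally: it takes an $r$-path $P$ minimizing $h(P)=\max\{t : T\setminus P \text{ has a component with } t \text{ leaves}\}$ and derives a contradiction by rerouting the path toward the root $z$ of an overweight component $T_1$; that step leans on the standing assumption that every non-root internal vertex has at least two children (to argue that no component of $T\setminus Q$ inside $T_1$ retains all $n_1$ of its leaves). Your heavy-path descent is constructive and purely local: each off-path child $w$ of a path vertex $v$ satisfies $\ell(w)\le\ell(c)\le\ell(v)-\ell(w)\le n-\ell(w)$, hence $\ell(w)\le n/2$, and this needs no degree assumption at all, so it proves a slightly more general statement for arbitrary rooted trees. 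It also immediately yields the linear-time algorithm that the paper only supplies afterwards in a remark (via heavy-light decomposition), so in effect you have merged the existence proof with the algorithmic one. One tiny quibble: in your closing discussion of degenerate cases, for $n=1$ the tree may consist of the root with a single leaf-child (the degree restriction applies only to non-root vertices), in which case $P=\{r\}$ alone would leave a component with one leaf exceeding $n/2=1/2$; but your actual construction descends all the way to a leaf, so $T\setminus P$ is empty there and the statement holds--just make sure the final write-up uses the descent rather than $P=\{r\}$ in that case.
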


\begin{proof}
If $n=1$ then the path with only one vertex $r$ satisfies the requirement. So we assume $n\ge2$. Under this assumption, for any $r$-path $P$, $T\setminus P$ must have at least one component. This allows us to define for any $r$-path $P$:

\begin{align*}
    h(P) = \max \{\mbox{$t\mid\ T\setminus P$ has component with $t$ leaves}\}
\end{align*}
 Let $P$ be an $r$-path that minimizes $h(P)$. We prove that $P$ satisfies the lemma. 

\begin{figure}[h]
	\includegraphics[width=4.5cm]{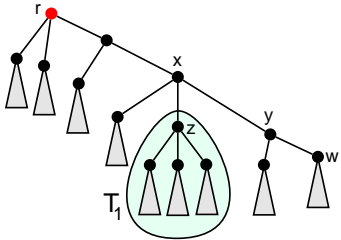}
	\centering
	\caption{The tree used in the proof of Lemma \ref{lmm: cater}}
\label{fig: lemma 1}
\end{figure}

\noindent Suppose on the contrary that $P$ does not satisfy the lemma. That is, $T\setminus P$ has a component $T_1$ with $n_1>n/2$ leaves. Let the ends of $P$ be $r$ and $w$ and let the root of $T_1$ be $z$, as illustrated in the Fig. \ref{fig: lemma 1} above. Let $Q$ be the unique path of $T$ between $r$ and $z$. We prove that $h(Q)<n_1\le h(P)$, which will be a desired contradiction.  

To estimate $h(Q)$ we observe that $T\setminus Q$ has two types of components: those that are disjoint from $T_1$ and those that are contained in $T_1$. For the ones that are disjoint from $T_1$, the number of leaves each of them may have is bounded by $n-n_1$, which is smaller than $n_1$. Next, we consider a component $T'$ of $T\setminus Q$ with $T'\subseteq T_1$. Since $T_1$ has $n_1>n/2\ge 1$ leaves, $z$ is not a leaf of $T$. By the assumption we made in the beginning of Section 2, $z$ has at least two children. It follows that $T'$ does not contain all leaves of $T_1$, which implies that $T'$ has fewer than $n_1$ leaves. Therefore, we have shown that every component of $T\setminus Q$ has fewer than $n_1$ leaves. Consequently, $h(Q)<n_1$, contradicting the choice of $P$. This contradiction proves the lemma.

\end{proof} 
\noindent {\bf Proof of Theorem \ref{thm: cater}.} 
\begin{proof}
As we observed earlier, every tree in $\mathcal L_3$ is a caterpillar, so we have $\mathcal L_n\subseteq \mathcal C_0$, for $n=1,2,3$, and thus the theorem holds for $n=1,2,3$. Suppose the theorem holds for $n-1$, where $n\ge4$. We prove that the theorem holds for $n$, and this would prove the theorem. 

Let $T$ be a tree with $n\ge 4$ leaves. We need to show $T\in\mathcal C_{\lfloor \log_2 n\rfloor}$. We may assume that $T$ is not a caterpillar because otherwise $T\in\mathcal C_0\subseteq \mathcal C_{\lfloor \log_2 n\rfloor}$. By Lemma, $T$ has an $r$-path $P$ such that each component of $T\setminus P$ has at most $n/2$ leaves. Since $T$ is not a caterpillar, $T\setminus P$ has at least one nontrivial component. Let $T_1,...,T_k$ be all such components. By our induction hypothesis, each $T_i$ belongs to $\mathcal C_{\lfloor \log_2 (n/2)\rfloor} = \mathcal C_{\lfloor \log_2 n\rfloor-1}$. It follows  that $T\in\mathcal C_{\lfloor \log_2 n\rfloor}$ since $T$ is obtained by adding $T_1,...,T_k$ to $T_0$. This completes our induction and it proves the theorem.
\end{proof}

\begin{remark}
The decomposition in the above theorem can be computed in linear time as follows. First we apply depth first search to compute for each node $u$ in $T$ the number of leaves in the subtree $T_u$.
Then we apply a  slightly modified heavy-light decomposition (see for example \cite{sleator1983data}) to obtain a  decomposition of $T$ into disjoint paths $\cal P$.
If there is a $r$-path in $\cal P$ (there can be at most one) then use it as $P$. Otherwise pick any child $u$ of $r$ and take $(r,u)$ as $P$.
We do not need to recompute the heavy-light decomposition for the recursive case and rather use the one computed for $T$. The heavy-light decomposition can be computed in linear time and hence also the caterpillar decomposition.
\end{remark}\label{rm: algo decomp}

\ifx false
\begin{theorem}\label{thm: label}
Every cographs admits a $O(\log^2 n)$-length implicit representation.
\end{theorem}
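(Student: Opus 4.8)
The plan is to reduce adjacency in a cograph $G$ to the union/join label of a lowest common ancestor in its cotree, and then to exploit the caterpillar decomposition of Theorem \ref{thm: cater} to encode, in each vertex label, just enough of the recursive decomposition to recover that ancestor's label. Recall that a cograph $G$ on $n$ vertices has a cotree $T$ (its clique-width-$2$ union tree, after merging consecutive like operations): a rooted tree whose leaves are the vertices of $G$ and whose internal nodes carry a label in $\{0,1\}$ (union versus join), with the defining property that $\mathsf{adj}(u,v)=1$ exactly when the label of $lca_T(u,v)$ is $1$. Since $T\in\mathcal L_n$ (it has $n$ leaves and every internal node has at least two children), Theorem \ref{thm: cater} gives $T\in\mathcal C_p$ with $p=\lfloor\log_2 n\rfloor$, so $T$ admits a recursive caterpillar decomposition of depth $O(\log n)$.

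First I would set up the decomposition explicitly. At the top level, $T$ is obtained by adding nontrivial subtrees $T_1,\dots,T_k$ to a caterpillar $T_0$ with spine (root-leaf path) $P=(p_0,\dots,p_\ell)$; recursing inside each $T_i$ gives, for every leaf $v$, a chain of subtrees it belongs to until it \emph{settles} as a leaf of some caterpillar at a level $\ell_v\le p$. For each level $\ell\le\ell_v$ I record three fields for $v$: (i) a subtree identifier $s_\ell(v)$ that is unique among the sibling branches of the caterpillar at that level (so $s_\ell(u)=s_\ell(v)$ precisely when $u,v$ lie in the same child branch, given that they agree at all earlier levels), (ii) the spine depth $d_\ell(v)$ at which $v$'s branch attaches to that level's spine, and (iii) the $\{0,1\}$ cotree-label $c_\ell(v)$ of that spine node. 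The label of $v$ is the concatenation of these fields over all $O(\log n)$ levels; since $s_\ell$ and $d_\ell$ each cost $O(\log n)$ bits and $c_\ell$ costs one bit, the total length is $O(\log^2 n)$, as claimed.

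The decoder would compare $\mathsf{enc}(u)$ and $\mathsf{enc}(v)$ level by level to find the \emph{divergence level} $\ell^*$, the first level at which the subtree identifiers differ (or at which one vertex settles while the other descends). The key structural fact I must verify is that $lca_T(u,v)$ is then exactly the spine node $p_{\min(d_{\ell^*}(u),d_{\ell^*}(v))}$ of the caterpillar at level $\ell^*$: whichever of $u,v$ attaches higher on the spine determines the branching point, whether the two vertices are genuine leaves of that caterpillar, lie in distinct added subtrees, or one settles while the other descends. Granting this, the decoder outputs $c_{\ell^*}(w)$, where $w$ attains the smaller attachment depth (the two stored labels agree when the depths coincide), and by the cotree property this equals $\mathsf{adj}(u,v)$.

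The main obstacle I anticipate is precisely this case analysis for the ancestor at the divergence level, together with the boundary issue that the terminal spine vertex $p_\ell$ is itself a leaf and carries no union/join label. I would handle it by checking that the node selected by the $\min$-depth rule is always an \emph{internal} spine node (hence labeled), which holds because a strictly shallower attachment point can never be the terminal vertex, and two distinct leaves attaching at the same depth must hang off a common internal spine node. Everything else is routine: the cotree and (by the Remark following Theorem \ref{thm: cater}) the caterpillar decomposition are computable in linear time, and decoding runs in $O(\log^2 n)$ time. I note that this bound is deliberately weaker than what Theorem \ref{thm: clique width tree} yields for clique-width $2$; the purpose here is to display the caterpillar-decomposition encoding in its simplest form, before the $O(k\log k)$-per-level refinement used in the main result.
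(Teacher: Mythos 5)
Your proposal is correct and follows essentially the same route as the paper's proof: decompose the cotree via Theorem \ref{thm: cater}, store per level the attachment depth on the spine and a branch identifier, locate $\mathsf{lca}(u,v)$ at the first divergence level via the $\min$-depth rule, and sum $O(\log n)$ bits over $O(\log n)$ levels. Your only additions are explicitly storing the union/join bit of the attachment node (which the paper leaves implicit in its lca-labeling) and the check that the selected node is an internal spine vertex; both are sound.
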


\begin{proof}
Let $T$ be the cotree corresponding to $G$. We are interested at a labeling of the leaves of $T$ from which we can infer the $\mathsf{lca}$ of pairs of leaves.
Let $T \in {\cal C}_p$ and take $P$ as our desired $r$-path of $T$.
Let ${\cal T}_k$ be the set of (possibly trivial) trees attached to the $k^{th}$ node of $P$ starting from the root $r$.
Let $v$ be any leaf in a tree from the collection  ${\cal T}_k$.
We label $v$ as follows. We use $\lceil \log n \rceil$ bits to encode the fact that $v \in {\cal T}_k$; the bits are used to store $k$.
Then we recursively compute the label of $v$ with respect to the tree ($T'$) in ${\cal T}_k$ which $v$ is a leaf of.
By construction each tree in ${\cal T}_k \in {\cal C}_{p -1}$, hence the size of the label of $v$ can be recursively computed : let $l_{p-1}(v)$ the label of $v$ with respect to $T'$. Then $l_{p}(v)$ is $l_{p-1}(v)$ appended with $\lceil \log k \rceil$-bits to specify $k$ and $\lceil \log {\cal T}_k \rceil$ bits ($\cal T$-value) to specify $T'$.
Decoding is straight forward. Given $l_p(u), l_p(v)$ we first we compare the $k$-value of $u$ and $v$. If $k_u > k_v$ or $k_v > k_u$ then $\mathsf{lca}(u,v)$ is the node on $P$ with distance $\min(k_u, k_v)$ from $r$.
Else, if their $\cal T$-value differ then $\mathsf{lca}(u,v)$ is the $k^{th}$ node from the root.
Otherwise their $\mathsf{lca}$ is determined by the labels $l_{p-1}(u), l_{p-1}(v)$.
Decoding  takes linear time in the size of the labels.
The above encoding scheme gives a simple recurrence relation for the size of the encoding:
\begin{align}\label{eqn: label rec}
  A(p) \le A(p-1) + 2\lceil \log n \rceil.  
\end{align}
which yields , $A(p) =  O(p \log n)$. This proves the claim of the theorem by taking $p \le \log n$.
\end{proof}


\fi
\section{An Adjacency Labeling Scheme}\label{sec: main}
Recall that a proper union tree $T$ is a rooted binary tree with root $r$ and has $n$ leaves. The initial label of a leaf $u$ will be denoted as $c_0(u)$. For an internal node $x \in T$ let $L_x$ be the set of leaves in the subtree $T_x$.
We begin with a lemma. 

\begin{figure}[h]
	\includegraphics[width=4cm]{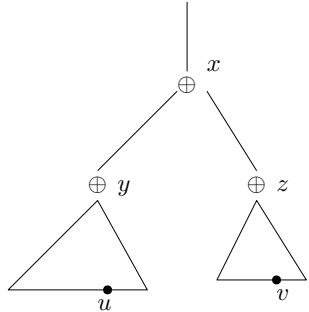}
	\centering
	\caption{We want to determine the information needed to compute the adjacency between $u$ and $v$ given we already know their lowest common ancestor $x$.}
\label{fig: local}
\end{figure}

\begin{lemma}\label{lmm: node encoding}
Suppose $G$ is a graph of clique-width $k$ and $T$ be a proper union tree of $G$. We consider a node $x$ of $T$ as shown in Fig. \ref{fig: local}.
 Let $x = \mathsf{lca}(u,v)$, where $u$ and $v$ are two leaf nodes. Given $x, c_0(u)$ and $c_0(v)$ we can determine $\mathsf{adj}(u,v)$ for all $v \in T_z$ with an additional $O(k \log k)$ bits of information stored locally at $u$.
\end{lemma}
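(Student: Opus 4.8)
The plan is to reduce the adjacency question at $x$ to evaluating a single symmetric Boolean function determined by the decorator $d_x$, and then to argue that one row of that function's table, together with $u$'s label at $x$, is essentially all that $u$ must store. Throughout, let $y,z$ be the two children of $x$ with $u\in T_y$ and $v\in T_z$, so the lemma asks for adjacency to an arbitrary $v\in T_z$.

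First I would pin down what adjacency at $x$ actually depends on. Write $a$ for the label carried by $u$ at the root of $T_y$ (its \emph{entering label}) and $b$ for the analogous entering label of $v$ at the root of $T_z$; these are the labels of $u$ and $v$ immediately before the operations in $d_x$ are applied. Because $u$ and $v$ lie in different child subtrees of $x=\mathsf{lca}(u,v)$, the disjoint union at $x$ places no edge between them, and since $T$ is proper we have $H_x=G_x$, so the only place the edge $uv$ can be created is inside $d_x$. The key observation is that a relabel $\rho_{i\to j}$ acts globally on every vertex currently labeled $i$; hence the entire trajectory of labels that $u$ passes through as $d_x$ is processed is a deterministic function of $a$ alone, independent of which subtree $u$ came from, and symmetrically for $v$ and $b$. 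Consequently there is a symmetric function $f_x:[k]\times[k]\to\{0,1\}$, computed from $d_x$ by simulating its operations while maintaining the current label of each of the $k$ entering classes, such that $\mathsf{adj}(u,v)=f_x(a,b)$, with $f_x(c,c)=0$ since equal entering labels never separate and a join $\eta_{i,j}$ requires $i\ne j$.

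Next I would turn $f_x$ into a local encoding. For a fixed entering label $a$, the slice $b\mapsto f_x(a,b)$ is a $k$-bit vector $r_a$. So I would attach to the ancestor $x$ in $u$'s label the pair $(a,r_a)$: the entering label $a$ in $\lceil\log_2 k\rceil$ bits and the row $r_a$ in $k$ bits. Crucially $(a,r_a)$ does not depend on $v$, which is exactly what lets the same $O(k)$ bits resolve adjacency to \emph{every} $v\in T_z$: once the decoder knows $v$'s entering label $b$ it simply returns $r_a[b]$. To support the top-down scheme of Section~\ref{sec: main}, where a label must be pushed through several decorators before reaching the relevant ancestor, I would additionally record the net relabeling $\pi_x:[k]\to[k]$ effected by $d_x$ (the final value of the simulated label map); this costs $k\lceil\log_2 k\rceil$ bits and is what raises the per-node cost to the stated $O(k\log k)$.

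The step I expect to be the real obstacle is supplying $v$'s entering label $b$: it is genuinely \emph{not} determined by $c_0(v)$ and $x$ alone, since relabels inside $T_z$ can send two leaves with the same initial label to different labels at the root of $T_z$, and $u$ cannot afford to store the position-to-label map of all of $T_z$. The resolution I would adopt is symmetry of the encoding: $x$ is also an ancestor of $v$, so the block attached to $x$ in $v$'s own label records $v$'s entering label $b$ (produced during encoding by composing the relabelings below $z$, using the $\pi$ data). The decoder then reads $(a,r_a)$ from $u$ and $b$ from $v$ and outputs $r_a[b]$, which equals $r_b[a]$ by symmetry of $f_x$. I would finish by checking the boundary conventions, namely empty decorators and the leaf case $T_z=\{v\}$ where $b=c_0(v)$, to confirm that the decoding rule and the $O(k\log k)$ bound hold uniformly, and by noting that each $f_x$ and $\pi_x$ is computed once per node at encoding time so the construction stays efficient.
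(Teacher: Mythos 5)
Your proposal tracks the paper's proof of Lemma~\ref{lmm: node encoding} in its main lines: reduce adjacency at $x$ to the two entering labels $c_x(u),c_x(v)$ together with the join pattern of $d_x$ (your $f_x(a,b)$ is exactly the paper's test ``$c_x(v)\in C_x(u)$''), and store at $u$ its entering label, the row of the join relation indexed by it, and a net relabeling map, for $O(k\log k)$ bits in total. The one place you diverge is the one place the paper's own argument breaks, and your suspicion there is justified. The paper recovers $c_x(v)$ as $F_x(u)[c_0(v)]$, on the grounds that ``the sub-$k$-expression induced by $T_z$ is just a re-labeling.'' That map is not well defined on initial labels: if $z$ has two branches and the decorators in one branch relabel $1\to 2$ while the other branch leaves label $1$ untouched, then two leaves of $L_z$ with $c_0=1$ enter $x$ carrying labels $2$ and $1$ respectively, and a single join in $d_x$ can make $u$ adjacent to one but not the other. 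Hence $\mathsf{adj}(u,\cdot)$ restricted to $L_z$ need not factor through $c_0(\cdot)$, and the lemma as literally stated --- adjacency determined from $x$, $c_0(u)$, $c_0(v)$ and $O(k\log k)$ bits held only by $u$ --- is false; no choice of the list $F_x(u)$ repairs it.

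Your repair, having $v$ itself carry its entering label $b=c_x(v)$, is sound for the lemma in isolation (it moves $O(\log k)$ bits onto $v$, which is harmless for the final bound but must be reflected in the statement). The difficulty is that it does not compose with the recursive scheme of Section~\ref{sec: main} in the form you describe. When $l_1(u)<l_1(v)$, the node $x=\mathsf{lca}(u,v)$ is the attachment node $P(u)$ of $u$'s subtree on the path $P$, which is in general a \emph{strict} ancestor of $v$'s attachment node $P(v)$; the block that $v$'s label carries at this recursion level records its entering label at $P(v)$, not at $x$. Converting the former into the latter requires the composite relabeling along the segment of $P$ strictly between the two attachment nodes --- an object depending on both endpoints, over a path that may contain $\Omega(n)$ decorated nodes --- and it is not clear how to store this within $O(k\log k)$ bits at either vertex. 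So the genuine gap sits at the interface between Lemma~\ref{lmm: node encoding} and Theorem~\ref{thm: clique width tree}: the lemma must specify precisely which label of $v$ the decoder is permitted to see, and neither the paper's $F_x(u)$ nor your symmetric-block convention supplies $c_x(v)$ for all pairs $(u,v)$ that actually arise. A correct writeup has to explain how $v$'s label at an arbitrary ancestor on $P$ is communicated within the stated bit budget.
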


\noindent This $O(k \log k)$-bits of information will serve to perform adjacency queries between $u$ and the set $L_z$. In theorem \ref{thm: clique width tree} we show that we can partition $V$ to $O(\log n)$ such sets for each vertex in $V$.

\begin{proof}
{\color{red} }
We consider the situation shown in Fig. \ref{fig: local}. 
Let $B_x$ be the set of unique labels assigned to the leaves of the subtree rooted at $x$ after applying $d_x$.
In order to determine $\mathsf{adj}(u,v)$ it is sufficient to know; 1) the labels of $u$ and $v$ after application of the decorators $d_y$ and $d_z$ respectively and 2) the decorator $d_x$. 
However, we do not need to know the entirety of $d_x$ but only whether $c_x(v) \in C_x(u)$, which is defined next.
Suppose $c_x(u)$ ($c_x(v)$) are the labels of $u$ (resp. $v$) before applying $d_x$. From $d_x$ we can easily determine the set of labels $C_x(u) \subseteq B_
y \cup B_z$ such that,
\begin{align*}
   \forall i \in C_x(u)\ \exists \eta_{i \to c_x(u)}\ \mbox{or}\  \eta_{c_x(u) \to i} \in d_x
\end{align*}
It is important to note that when defining the set $C_x(u)$ we consider the labels from the set $B_
y \cup B_z$ before any re-labeling due to $d_x$\footnote{Alternatively, we may assume that all relabeling operations in $d_x$ proceeds all join operations\cite{courcelle2012graph}}.
As an example, suppose $B_y = \{1, 2\}, B_z = \{3, 5\}$, $c_x(u) = 1$ and $$d_x = \rho_{3 \to 2}\eta_{1 \to 2}\rho_{2 \to 5}\eta_{5 \to 1}$$ then $C_x = \{2,3,5\}$ and not simply $\{2,5\}$.
Clearly $|C_x(u)| \le k-1$ and it takes $O(k \log k)$-bits to store $|C_x(u)|$. 
Next, we need to retrieve $c_x(v)$ for any $v \in L_z$. This can be done by storing an additional $O(k \log k)$-bits at $u$. This follows from the fact that, given an initial labeling of $L_z$, the sub-$k$-expression induced by $T_z$ (including applying the decorator $d_z$)  is just a re-labeling. This re-labeling can be stored as a list ($F_x(u)$) of size $k$ where each value is between $1$ and $k$. The $c_0(v)^{th}$ entry of this list gives $c_x(v)$.
Finally, we use $O(\log k)$ bits to store $c_x(u)$ at $u$.
\end{proof}

Going forward, we will describe an encoding of each leaf as an alternating sequence of labels of two types. One containing path information and the other containing adjacency information.
For the latter, we will use $C_x(u), F_x(u)$ and $c_x(u)$. We let $A_x(u) = (C_x(u), F_x(u), c_x(u))$.

\ifx false
Let $c_0 = c_x(u)$ and let $c_0 \to c_1 \to \ldots \to c_r$ be the sequence of re-colorings of the color of $u$ induced by $t_x$.

An edge decorator $t$ can be thought of as a  $k$-expression generated by the regular expression $\{\eta, \rho\}^*$ and appropriate labels. For example say $$t = \rho_{1 \to 4}\eta_{2,3}\rho_{4\to 2}\eta_{2,3}.$$ The size $|t|$ of the decorator $t$ is the number of join and relabel operations in its $k$-expression (in the above $|t| = 4$). We argue that $t \le {k \choose 2} + k - 1 + k/2$. First note
that a relabel operation $\rho_{i \to j}$ where the labels $j$ is present in the subtree reduces the number of labels by 1. However if $j$ is not present the number of labels does not reduce. However there can be at most $k/2$ such relabeling where the destination label is not present since otherwise due to transitivity we can combine $\rho_{i \to j}$ and $\rho_{j \to j'}$ to $\rho_{i \to j'}$. So there can be at most $k-1+ k/2 = \frac{3}{2}k-1$ relabeling operations.
Further, there are at most ${k \choose 2}$ distinct join operations (since $\eta_{i,j} = \eta_{j,i}$). Between two  successive $\eta_{i,j}$'s joining the label classes $i$ and $j$ (for example $\eta_{2,3}$ in the $k$-expression above) there must be at least one relabel of the form $\rho_{i' \to i}$ or $\rho_{j' \to j}$ otherwise we could remove one of the $\eta_{i,j}$'s without changing the result of applying the $k$-expression. 
This implies  each join operation $\eta_{i,j}$ reduces the number of pairs of label classes not yet joined by exactly 1. Hence there can be at most ${k \choose 2}$ join operations in total, giving $|t| \le {k \choose 2} + k - 1$. 
Lastly we note that it takes $2\log k + 1$ bits to  store a join or a relabel operation which consists of storing two integers in $[k]$ and the operation type. Hence we can store an edge decorator with $O(k^2\log k)$-bits (the extra $O(\log \log k)$ bits  needed to make the encoding  self-delimiting is subsumed in the big-O notation).

\fi

\gentree*

\begin{proof}

\begin{figure}[h]
	\includegraphics[width=13cm]{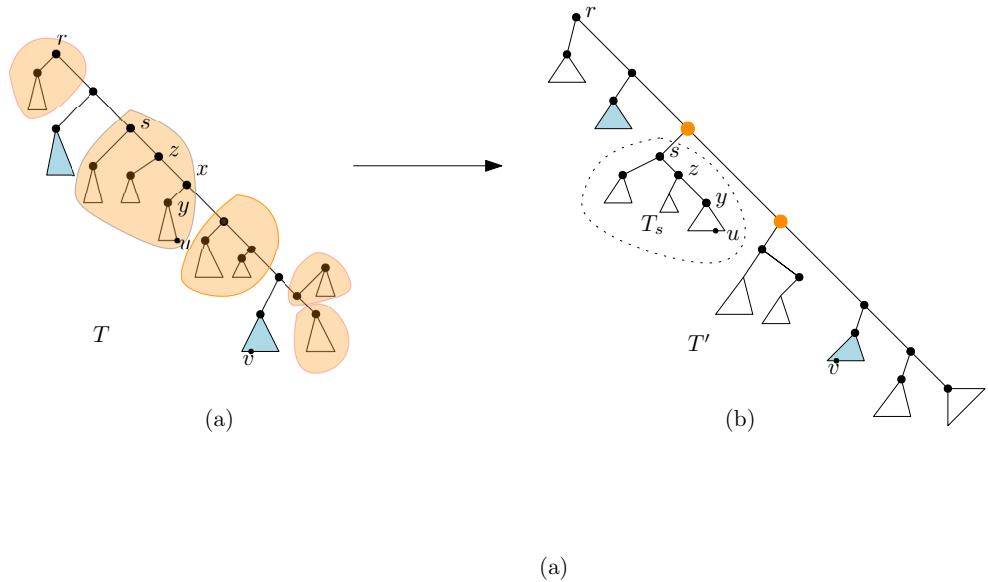}
	\centering
	\caption{The figure shows the tree obtained after contracting the smaller bushes. To make $T_s$ a proper union tree, the node $x$ is removed and $y$ is made a child of $z$.}
	\label{fig: cont}
\end{figure}

\noindent First we start from the $r$-path decomposition of $T$ as described in the previous section. Let $P$ be a $r$-path.
From Lemma \ref{lmm: cater} we know that the subtrees attached to $P$ have $\le n /2$ leaves.
Let ${\cal T}_{large}$ be a possibly empty collection of subtrees which has between $n/4$ and $n/2$ leaves.
These subtrees are identified with light blue color in Fig \ref{fig: cont}-a.
Note that $0 \le |{\cal T}_{large}| \le 4$.
  Consider the sequence(s) of smaller subtrees (we will call    them bushes) between the trees in ${\cal T}_{large}$.
These bushes are highlighted with orange regions in the figure. There may be no such bushes between two large trees. 
Now we collate the bushes between two successive large subtrees (while descending along $P$)  to create larger bushes until the total number of leaves among them is $\ge n/4$ but $\le n/2$. At this point we call it a super-bush and restart the gathering process on the remaining bushes until we get another super-bush or we reach the end of the bushes. In the latter case we create a super-bush with whatever we have gathered upto that point. That is, we group the bushes (if any) between two successive large subtrees into $\ge n/4$ sized super-bushes (but no larger than $n/2$) except may be a constant number of groups which can have  $< n/4$ leaves.
We identify each super-bush with a tree, the root of which is the node closest to $r$.
Further, we attached the tree to $P$ using the node of the super-bush which was closest to $r$.
For example, in Fig. \ref{fig: cont}-a for the  super-bush starting from the node $s$ we create a tree $T_s$ with $s$ as the root. We attach  $T_s$ to $P$ where the node $s$ was previously located.
From our construction,  the number of  such attachments will also be a constant. 
The decorators remain with the original vertices and the new ({\color{orange} orange} vertices in Fig. \ref{fig: cont}-b) vertices on $P$ does not contain any decorators. 
The resulting tree, denoted by $T'$, is not necessarily a valid union tree. However, we ensure that each subtree attached to $P$ is a proper union tree (Fig. \ref{fig: cont}-b).

First we informally describe the decoding scheme; this will give us an idea of what information to encode within the labels.
Let $u, v$ be a pair of leaves in $T$ (Fig. \ref{fig: cont}). Let $x = \mathsf{lca}(u, v)$.
From lemma \ref{lmm: node encoding} we see that labels of size $O(k \log k)$-bits are sufficient to determine $\mathsf{adj}(u,v)$ given $x, c_0(u)$ and $c_0(v)$.
It remains to be determined the number of such  labels we need to determine adjacency between $u$ and any other vertex in $G$.
Trivially, we can maintain one such label for each node on the root-leaf path (in $T$) terminating in $u$. Since a path (in the caterpillar-decomposition) can have arbitrary length we will need $\Omega(\log n)$-bits to locate a node in each level of the caterpillar decomposition. Since there are $O(\log n)$ levels, we may end up needing $O(\log^2 n)$-bits  to encode the path information in the final label.
To reduce the encoding size and get our claimed bound we make the following crucial observation.
It is not necessary to determine the $\mathsf{lca}(u,v)$ explicitly.
It suffices to know $A_x(u), c_0(u)$ and $c_0(v)$ to determine $\mathsf{adj}(u,v)$.
By taking a recursive approach, we show that we only need to remember $O(\log n)$ number of adjacency-type labels per vertex $u$. Since, each adjacency information requires $O(k \log k)$ bits labels, we get the bound claimed in the theorem. 
This recursive encoding scheme is determined based on $T'$. 
Let $l_1(u)$ (resp. $l_1(v)$) be the index of the subtree $u$ (resp. $v$) is a leaf of on the path $P$ in $T'$. 
For example, in Fig. \ref{fig: cont} we have $l_1(u) = 3$ and $l_1(v) = 5$. There are two cases:
\begin{description}
\item[(i)($l_1(u) = l_1(v)$)] Then $u,v \in L_s$ for some node $s \in P$ (see Fig. \ref{fig: cont}-b). To determine $\mathsf{adj}(u,v)$, we recurse on the subtree $T_{s}$. According to our construction $T_s$ is a proper union tree corresponding  to the induced subgraph $G[L_s]$. Then we determine an adjacency labeling scheme for $G[L_s]$ using $T_s$. This is used to determine $\mathsf{adj}(u,v)$. This recursive construction is possible, since any induced subgraph of $G$ has a clique-width $\le k$ and $T$ is a proper union tree.
\item[(ii)($l_1(u) \ne l_1(v)$)] We assume without loss of generality that $l_1(u) < l_1(v)$ (the case $l_1(u) > l_1(v)$ is symmetric). In this case we use $A_x(u), c_0(u)$ and $c_0(v)$ to determine $\mathsf{adj}(u,v)$.
This completes the informal description of the decoding. From this, an encoding scheme emerges naturally. 
\end{description}


{\it Encoder:} 
Generate $T'$ from $T$ and for each $u \in V$ we compute $(l_1(u), A_{P(u)}(u))$. Here, $P(u)$ is the lowest ancestor of $u$ on the path $P$ (in Fig. \ref{fig: cont}-b $P(u) = x$). For notational simplicity we denote $A_{P(u)}(u) = A_{1}(u)$. 
Then, perform the encoding recursively on each induced subgraphs of $G$ corresponding to the subtrees attached to $P$.
For the vertex $u$ this process gives a sequence of labels $((l_i(u),A_i(u))$'s. Appending to this sequence its initial label $c_0(u)$ gives the final encoding:  $$\mathsf{enc}(u) = (c_0(u), (l_1(u),A_1(u)),\ldots,(l_p(u),A_p(u)) ),$$ where $p = O(\log n)$ (from Theorem \ref{thm: cater}).
It is clear from the construction that $\mathsf{enc}$ uses $O(k\log k\log n)$-bits. 

{\it Decoder: }
Given two strings $\mathsf{enc}(u)$ and $\mathsf{enc}(v)$ first we check the  labels\\ $(l_1(u),A_1(u))$ and $(l_1(v),A_1(v))$.
If $l_1(u) < l_1(v)$ then we use $A_1(u), c_0(u)$ and $c_0(v)$ to determine $\mathsf{adj}(u,v)$.
The case $l_1(u) > l_1(v)$ is symmetric.
Otherwise, $l_1(u) = l_1(v)$. In this case we proceed to check the next pair of labels $(l_2(u),A_2(u))$ and $(l_2(v),A_2(v))$ and so on.
In general, let $i$ be the smallest number such that $l_i(u) \ne l_i(v)$. Then, using either $A_i(u)$ or $A_i(v)$ and $c_0(u), c_0(v)$ we can determine $\mathsf{adj}(u,v)$.
By our construction there is always such an $i \le p$ such that $l_i(u) \ne l_i(v)$.

{\it Correctness:} We induct on the depth of the recursive construction. For the base case, we take $p = 0$ and the correctness follows trivially.
Assume the encoder-decoder works correctly whenever the decomposition has depth $\le p-1$. This takes care of the case $l_1(u) = l_1(v)$. For the remaining case assume $l_1(u) < l_1(v)$. Then correctness follows from lemma \ref{lmm: node encoding}.
\end{proof}


\begin{remark}
Recall from Remark 1, given a union tree $T$ we can determine the recursive decomposition in $O(|T|)$ time. Additionally, $O(k \log k \log n)$ time is spent processing each leaf of $T$. Thus $\mathsf{enc}$ can be computed in $O(kn \log k \log n)$ time. 
Decoding can be done in linear time in the size of the labels (i.e., in $O(k \log k \log n)$ time). 
\end{remark}

\section*{Acknowledgement}
The author would like to thank Guoli Ding for many discussions and considerable advice. In particular for the proof of caterpillar decomposition. We also thank anonymous reviewers for their helpful comments.

\ifx false
\begin{figure}[h]
	\includegraphics[width=7cm]{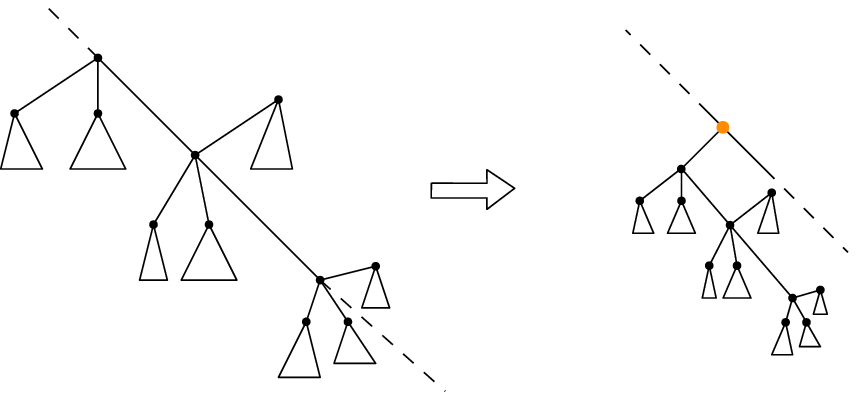}
	\centering
	\caption{Collating multiple bushes along $P$}
	\label{fig: line}
\end{figure}

\begin{figure}[h]
	\includegraphics[width=8cm]{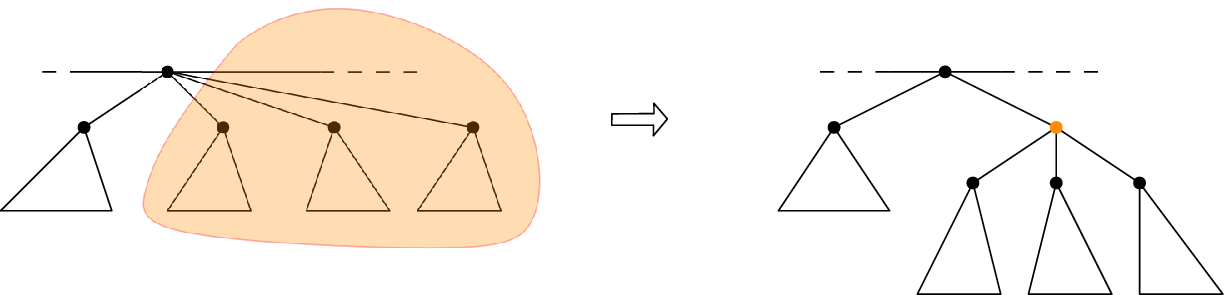}
	\centering
	\caption{Collating subtrees sharing a common representative.}
	\label{fig: multi}
\end{figure}

\section{Extension to the General Case}
Now we revisit the general case with the techniques of section 4 at hand. There are two main complications and these are shown in Fig. \ref{fig: line} and Fig. \ref{fig: multi}.
First we look at the situation of Fig. \ref{fig: line}.
Here a partial section of the $r$-path $P$ is shown.
This section has been identified as a super-bush.
In the right subfigure we see the section after transformation. This situation is almost identical to that in the previous section.
Next we look at Fig. \ref{fig: multi}.
In the left subfigure suppose the subtree to the left is already too big. Grouping it with any other subtree on its right would create a component with more than $n/2$ leaves.
In such a case we can collate the smaller subtrees (inside the orange shade) to create one or more components no more than $n/2$ leaves (with at least $\alpha n$ leaves if possible).
Applying this process throughout $T$ we can create another tree $T'$ (with respect to the same $r$-path) with the following properties.\\
(i) The maximum degree of $T'$ is bounded by a constant.\\
(ii) Total number of subtrees attached to the $r$-path $P$ is bounded by a constant.\\
(iii) Each subtree has at most $n/2$ leaves.
Hence we can apply the encoding procedure of the previous section with one additional modification.
Each sub-label now consist of three components $(l_i(u), s_i(u), p_i(u))$. The first and the third component is the same as before (index on $P$ of the root and its parity).
Here $s_i(u)$ stores the index of the subtree containing $u$ among the subtrees which are attached to $P$ via a common node (a.k.a. the representative).
By property (i) the value of $s_i(u)$ is bounded by a constant. 
Hence each sub-label needs a constant number of bits and $L(u)=((l_1(u), s_1(u), p_1(u))\ldots,(l_p(u), s_p(u), p_p(u)))$ requires $O(\log n)$ bits as before.
Note that in this case we do not have to do any special processing for the tail as it is taken care of by the label $s_i(u)$.

Now we describe the decoding. Let $L(u), L(v)$ are two labels. Suppose $l_1(u) < l_1(v)$ or $l_1(u) > l_1(v)$
then we proceed as in the previous section.
Else, suppose $l_1(u) = l_1(v)$. If $s_1(u) \ne s_1(v)$ then the adjacency is determined by either $p_1(u)$ or $p_1(v)$ (they will be the same).
Otherwise, we move on to check their second order labels $l_2(u)$ and $l_2(v)$ and so on. Here we alternate between checking the $l$-labels and the $s$-labels until we find a level where their labels differ (which by construction must exist).
This leads us to the main theorem of the paper:

\begin{theorem}
Every cograph has a $O(\log n)$-bit implicate local representation.
\end{theorem}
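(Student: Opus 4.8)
The plan is to exploit the defining structure of cographs. A cograph $G$ is encoded by a \emph{cotree}: a rooted (in general non-binary) tree $T$ whose leaves are the vertices of $G$ and whose internal nodes are labelled \emph{union} or \emph{join}, with the property that two leaves $u,v$ are adjacent in $G$ if and only if $\mathsf{lca}(u,v)$ is a join node. Hence, in contrast to Lemma \ref{lmm: node encoding}, a single bit describing the type of the least common ancestor suffices to recover $\mathsf{adj}(u,v)$; there is no $O(k\log k)$ decorator data to carry. (Since cographs are exactly the graphs of clique-width at most $2$, one could also invoke Theorem \ref{thm: clique width tree} with $k=2$ to get $O(\log n)$ immediately; but I would argue directly on the cotree, as this is the object the general-case encoding is built around and it makes the constant factor transparent.)

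First I would apply the recursive caterpillar decomposition of Theorem \ref{thm: cater} to the cotree $T$, producing a recursion of depth $p=\lfloor\log_2 n\rfloor$. At each level I fix an $r$-path $P$ and collate the hanging subtrees, following the super-bush construction in the proof of Theorem \ref{thm: clique width tree}, so that two branching bounds hold: only a constant number of subtrees are attached to $P$, and only a constant number of subtrees share any single attachment point of $P$. For each vertex $u$ I record at level $i$ a triple $(l_i(u),s_i(u),p_i(u))$, where $l_i(u)$ names the attachment of $P$ through which $u$ descends, $s_i(u)$ disambiguates among the constantly many subtrees glued at that attachment, and $p_i(u)$ is the type (union versus join) of the corresponding node of $P$. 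By the two branching bounds, $l_i(u)$ and $s_i(u)$ each range over a constant set, so every triple occupies $O(1)$ bits; concatenating the $p=O(\log n)$ triples together with the initial label yields an encoding of total size $O(\log n)$.

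Decoding mirrors Theorem \ref{thm: clique width tree}. Given $\mathsf{enc}(u)$ and $\mathsf{enc}(v)$, I scan the triples in parallel and locate the least level $i$ at which $u$ and $v$ first separate: either $l_i(u)\ne l_i(v)$, meaning they descend through different attachments of the current $P$, or $l_i(u)=l_i(v)$ but $s_i(u)\ne s_i(v)$, meaning they lie in distinct subtrees sharing one attachment. Because the decomposition preserves ancestry, $\mathsf{lca}(u,v)$ is then a node of the current path $P$: in the first case it is the attachment closer to $r$, so I read the type bit of the vertex with the smaller $l_i$ value; in the second case it is the common attachment and $p_i(u)=p_i(v)$, so either bit serves. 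This type bit is exactly $\mathsf{adj}(u,v)$. Correctness follows by induction on the recursion depth, with the base case a single path, and the total label length is $O(\log n)$ as claimed.

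The step I expect to be the main obstacle is the collation that enforces constant branching in both senses simultaneously, since a cotree may be arbitrarily non-binary at a single node. I would need to show that grouping small bushes into super-bushes of size between $n/4$ and $n/2$ leaves only $O(1)$ groups along $P$, that bundling subtrees sharing a representative into $O(1)$ classes is always possible, and that after this surgery each recursive subtree still has at most $n/2$ leaves so that Theorem \ref{thm: cater} applies. A secondary point is to verify that at the separating level the relevant node of $P$ has a well-defined type recorded consistently by both endpoints, so that reading $p_i$ never returns the wrong adjacency value.
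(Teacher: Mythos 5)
Your proposal matches the paper's own argument essentially step for step: the same caterpillar decomposition of the cotree to depth $O(\log n)$, the same collation into constantly many super-bushes and constantly many subtrees per attachment point, the same per-level triples $(l_i(u),s_i(u),p_i(u))$ with the third component recording the union/join type (parity) of the relevant node of $P$, and the same first-point-of-separation decoding. The obstacles you flag (the two branching bounds and the consistency of the recorded type at the separating level) are exactly the points the paper's construction addresses, and your parenthetical shortcut via clique-width at most $2$ together with Theorem \ref{thm: clique width tree} is also a valid alternative.
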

\fi
\ifx false

we only need to identify from the green nodes the node which is the root of the subtree $u$ is a leaf of.
These were the nodes which were originally on $P$ in $T$.
Directly storing this information in the label proves to be too costly.
Instead, we recursively decompose $T_i$, with respect to the root $r_i$.
From the previous section we know that there are $p = O(\log n)$-levels in the recursion. 
Let $L(u) = (l_1(u),\ldots,l_p(u))$.

be the tree after contracting the super-bushes. This tree is shown in the right subfigure. Let $P'$ be the contracted path corresponding to $P$.
Since $|P'| \le c+c' = O(1)$, we can label the nodes in $P'$ with a labels of size $\beta_1 = O(\log (c+c'))$, a constant. Note that, for some super-bushes we need to remember two labels. To make the labeling scheme uniform, if a super-bush only has one contracted node or if the node represents a large subtree then we keep two copies of the node. So that at the first level each super-bush or a large subtree ($T_i$) is represented by a pair $(l_i,r_i)_1$ of labels of constant size (where $r_i = l_i$ or $r_i = l_i + 1$).
Hence forth we will not differentiate between the super-bushes and the large subtrees and simply treat them as subtrees. Further each such subtree is represented by at most two nodes on the contracted path $P'$.

If two leaves of $T$ are on different subtrees, then this can be inferred from looking at their labels for the first level ($(l_i,r_i)_1$).
Suppose the leaves $u, v$ belongs to the subtrees with labels $(l_i,r_i)_1$ and $(l_j,r_j)_1$ respectively.
Then $\mathsf{lca}(u,v) = r_{\min (i,j)}$.
Now suppose $i = j$, that is both $u$ and $v$ are on the same subtree $T_i$ with respect to the decomposition at the first level.
Then we recurse on $T_i$ to find $\mathsf{lac}(u,v)$. 
In the recursive step, for each subtree $T_i$ we treat the node identified with the label $l_i$ as the root of $T_i$. Note that node with the label $r_i$ will only have one child. To make it proper we add a leaf directly to it. This does not affect the analysis for the label size, and it is only their to maintain the invariant that internals nodes have degree $\ge 3$ for a cotree.  

According to our construction each subtree is of size $\le n/2$, so we get an improved version of Eq. \ref{eqn: label rec} for the label size:

\begin{align}\label{eqn: label rec}
  A(n) \le A(n/2) + \beta_1.  
\end{align}
This immediately gives a $O(\log n)$ labeling scheme.
The label of a leaf $u$ is represented be a tuple $L(u) = \left((l_1(u),r_1(u)),\ldots,(l_p(u),r_p(u))\right)$ where $p = O(\log n)$ and size of each sub-labels is constant.
Decoding is straightforward.
Given $L(u)$ and  $L(v)$ we proceed as follows.
If $l_1(u) = l_1(v)$ (which also implies $r_1(u) = r_(v)$) we proceed recursively on the common subtree of both $u$ and $v$ given by the labels $L'(u) = \left((l_2(u),r_2(u)),\ldots,(l_p(u),r_p(u))\right)$ and  $L'(v) = \left((l_2(v),r_2(v)),\ldots,(l_p(v),r_p(v))\right)$.
If $l_1(u) \ne l_1(v)$, assume without loss of generality that $l_1(u) < l_1(v)$.
In that case we know that $\mathsf{lca}(u,v)$ is a node $z$ on the path $P$, corresponding to the subtree $u$ is in.
This node $z$ is inside the subtree

First the recursive 
: check the first sub-label if they are already different then stop. Else proceed to the next one. clearly this takes linear in the size of the labels.
From the preceding discussions we the following theorem as a consequence.

Let $f(n): \mathbb{N} \to \mathbb{N}$ such that $f(n) \in o(n)$. Let ${\cal T}_{large}$ be the collection of subtrees which has $\ge f(n)$ leaves. Note that $|{\cal T}_{large}| \le n/f(n)$. 
If ${\cal T}_{large}$ empty we simply use the procedure described in the proof of  theorem \ref{thm: label}.
Then the size of the labels, based on a recurrence recurrence relation given in equation \ref{eqn: label rec}, will be determined by the following recurrence:
\begin{align}
    A(n) \le A\left(\frac{n }{ f(n)}\right) + O(\log n)
\end{align}
Now assume ${\cal T}_{large}$is not empty.
Between two successive trees in ${\cal T}_{large}$ we identify the subtree of $T$ (shaded in orange in left diagram of Fig \ref{fig: cont}) and contract them to one  single super-node (right diagram of Fig \ref{fig: cont}). Let $T'$ be this contracted tree. 

In this section we extend the above Caterpillar decomposition in a non-trivial manner to reduce the size of the labels.
The main idea is as follows: instead of working with a path, we will use a tree which has limited number of branches. 
We  call them \emph{columnar trees}.
As in the previous section we identify the root vertex with $r$ for the tree $T$.
Now consider any rooted tree, where we allow the possibility of internal nodes having only two neighbours. 
Let the branching height of $T$ be defined as the height of the tree $T'$ which is obtained from $T$ be contracting edges incident to degree-2 internal nodes until all internal nodes are of degree $> 2$.
We use $bh(T)$ to denote the branching height of $T$.

Given our input tree $T$ let $T'$ be a rooted subtree with the same root $r$.
Let $bh(T') \le  k$. Let $F$ be the forest $T \setminus T'$ and $h(F)$ be the size (number of leaves) of the largest component in $F$.
We choose $T'$ such the $h(F)$ is minimum and $bh(T') \le k$.
Observe that Theorem 1 basically says that we can find a $T'$ with $h(F) \le |T|/2$ and $bh(T') = 1$. Since the branching height of a path is 1.
On the other extreme if allow $bh(T') \ge |T|/2$ then it is easy to see that we can find a $T'$ with $h(F) = 1$. 
\begin{lemma}
There is a $T'$ for which the following holds: 1) $bh(T') = O(\log n)$ and 2) $h(F) = O(\sqrt{n})$.
\end{lemma}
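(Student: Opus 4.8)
The plan is to build $T'$ by a recursive caterpillar decomposition and to cut off each dangling subtree as soon as it becomes small. First I would apply Lemma~\ref{lmm: cater} at the root to obtain an $r$-path $P_0$ all of whose hanging components have at most $n/2$ leaves; since $P_0$ is a single path, it contributes branching height $1$. I would then recurse into every hanging component whose leaf-count still exceeds $\sqrt{n}$, attaching the $r$-path produced by Lemma~\ref{lmm: cater} inside that component to $P_0$, and I would stop recursing (leaving the component untouched inside $F$) as soon as its leaf-count drops to at most $\sqrt{n}$.

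Controlling $h(F)$ is the easy half. Each application of Lemma~\ref{lmm: cater} at least halves the number of leaves in every newly created hanging component, so after $d$ levels of recursion every component not yet frozen into $F$ has at most $n/2^{d}$ leaves. Taking the cutoff at $d=\tfrac12\log_2 n$ forces every component of $F$ to have at most $\sqrt{n}$ leaves, giving requirement (2) immediately; this also interpolates the two extreme regimes already noted, since a single path (Theorem~\ref{thm: cater} with $p=1$) gives $h(F)\le n/2$ while descending all the way gives $h(F)=1$.

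The hard part, and the step I expect to be the genuine obstacle, is bounding $bh(T')$ by $O(\log n)$. The tempting claim is that the branching height equals the recursion depth, hence $O(\log n)$, but this is false: when the skeleton is formed by suppressing degree-$2$ vertices, every node of a path $P_i$ at which we branched into a recursed child survives as a vertex of degree $\ge 3$, and all such branch vertices stacked above one another on a single root-to-leaf path of $T'$ are counted toward its height. Thus $bh(T')$ is really the maximum number of branch vertices on any root-to-leaf path, and along one long $r$-path this number can grow as large as the number of oversized children we were forced to recurse into.

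Concretely, I expect the construction to break on a \emph{comb}: a spine $w_0,\dots,w_m$ with $m=\Theta(\sqrt{n})$ where each $w_i$ carries one pendant subtree $C_i$ of $2\sqrt{n}$ leaves. Since every $C_i$ has more than $\sqrt{n}$ leaves, no admissible $T'$ may leave a whole $C_i$ (or any spine tail) inside $F$ as one component; connectivity from $r$ then forces both $w_i$ and the root of $C_i$ into $T'$, so every $w_i$ is a branch vertex and the root-to-$w_m$ path passes through $\Omega(\sqrt{n})$ of them, yielding $bh(T')=\Omega(\sqrt{n})$. Because $T'$ must be a connected subtree of the \emph{given} $T$, and the spine is an unavoidable path in $T$, these branch vertices cannot be rebalanced into a shallow skeleton. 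I therefore anticipate that the recursive scheme delivers only the weaker pair $bh(T')=O(\sqrt{n})$ with $h(F)=O(\sqrt{n})$, and that achieving $bh(T')=O(\log n)$ simultaneously with $h(F)=O(\sqrt{n})$ would require an essentially different decomposition or may fail outright in the worst case — which is consistent with this statement being set aside.
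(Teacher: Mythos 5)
There is nothing in the paper to compare your attempt against: this lemma appears only inside an \texttt{\textbackslash ifx false} block (abandoned material on ``columnar trees'') and is never proved. Your reading of the situation is therefore the right one. The recursive caterpillar construction in your first two paragraphs does secure $h(F)=O(\sqrt n)$, and you correctly isolate the obstruction to the other half: after suppressing degree-$2$ vertices, $bh(T')$ counts the branch vertices stacked on a single root-to-leaf path, and a long $r$-path that is forced to sprout a branch at many of its vertices defeats any $O(\log n)$ bound. Your further claim that the recursive scheme still yields $bh(T')=O(\sqrt n)$ together with $h(F)=O(\sqrt n)$ is also correct (on any root-to-leaf path the recursed-into components at a given level are disjoint and each has more than $\sqrt n$ leaves, so their number is geometrically decreasing across levels, summing to $O(\sqrt n)$).

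The one genuine flaw is the parameterization of your comb. With teeth of $2\sqrt n$ leaves an admissible $T'$ may simply leave every tooth whole inside $F$, since $2\sqrt n=O(\sqrt n)$; the bare spine then has $bh(T')=1$ and $h(F)=O(\sqrt n)$, so your instance is not a counterexample. To actually refute the lemma you need teeth of $\omega(\sqrt n)$ leaves on a spine of $\omega(\log n)$ vertices: take a spine $w_0,\dots,w_m$ with $m=n^{2/5}$, hang a star $C_i$ with $n^{3/5}$ leaves off each $w_i$ for $i<m$, and let $w_m$ be a leaf. Any $T'$ with $h(F)=O(\sqrt n)$ must contain every $w_i$ with $i<m$ (otherwise an entire suffix of the comb, with at least $n^{3/5}$ leaves, is one component of $F$) and, by connectivity, the center of every $C_i$ (otherwise $C_i$ itself is one component of $F$). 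Hence each $w_i$ with $0<i<m-1$ has degree $3$ in $T'$, survives the contraction, and lies on a single root-to-leaf path of the contracted tree, giving $bh(T')\ge m-2=\Omega(n^{2/5})=\omega(\log n)$. So the statement is false as written, and the $O(\sqrt n)$/$O(\sqrt n)$ trade-off you extract is essentially the best this decomposition can promise on such instances.
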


\begin{theorem}
There is a $O(\log n \log \log n)$ labeling scheme for $T$.
\end{theorem}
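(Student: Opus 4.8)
The plan is to build the label of each leaf $u$ of the proper union tree $T$ as a short list of \emph{adjacency blocks}, one per level of a recursive decomposition of $T$, where each block is exactly the $O(k\log k)$-bit packet $A_x(u)=(C_x(u),F_x(u),c_x(u))$ supplied by Lemma~\ref{lmm: node encoding}. The guiding observation is that $A_x(u)$ resolves $\mathsf{adj}(u,v)$ for \emph{every} $v$ lying in the subtree hanging opposite to $u$ at $x=\mathsf{lca}(u,v)$; so if I can cover $V\setminus\{u\}$ by $O(\log n)$ such ``opposite subtrees'' (each realized as $L_z$ for an ancestor $x$ of $u$) and attach the corresponding $A_x(u)$, I will have $O(\log n)\cdot O(k\log k)=O(k\log k\log n)$ bits and be done. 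The engine producing only $O(\log n)$ groups is the caterpillar decomposition of Theorem~\ref{thm: cater}.

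Concretely, I would proceed recursively. First apply Lemma~\ref{lmm: cater} to obtain an $r$-path $P$ of $T$ all of whose hanging subtrees carry at most $n/2$ leaves. For each leaf $u$ I record (i) an index $l_1(u)$ naming the hanging subtree containing $u$, ordered by increasing distance from $r$ along $P$, and (ii) the block $A_1(u):=A_{P(u)}(u)$, where $P(u)$ is the lowest ancestor of $u$ on $P$. The decoding rule at this level is: if $l_1(u)\ne l_1(v)$, say $l_1(u)<l_1(v)$, then $\mathsf{lca}(u,v)=P(u)$ lies on $P$ and $v$ sits in the subtree hanging opposite $u$ there, so $A_1(u)$ together with $c_0(u),c_0(v)$ decides $\mathsf{adj}(u,v)$ by Lemma~\ref{lmm: node encoding}; if $l_1(u)=l_1(v)$, then $u,v$ share a hanging subtree $S$, which, being an induced subgraph of $G$, again has clique-width at most $k$ and a proper union tree, so I recurse inside $S$ and read the next blocks. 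Since every hanging subtree has at most half the leaves, Theorem~\ref{thm: cater} guarantees that the recursion bottoms out after $p=O(\log n)$ levels, producing $\mathsf{enc}(u)=(c_0(u),(l_1,A_1),\dots,(l_p,A_p))$.

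The step I expect to be the real obstacle is controlling the cost of the index $l_i(u)$. A naive $r$-path may have $\Theta(n)$ hanging subtrees, so naming one costs $\Theta(\log n)$ bits per level and inflates the total to $O(\log^2 n)$. To fix this I would coarsen the decomposition before recursing: keep the $O(1)$ ``large'' subtrees (those with between $n/4$ and $n/2$ leaves), and greedily coalesce the remaining small subtrees, in their order along $P$, into \emph{super-bushes} each holding between $n/4$ and $n/2$ leaves, with only a constant number of undersized leftovers. This leaves only $O(1)$ objects attached to $P$, so each $l_i(u)$ costs $O(1)$ bits while each piece still has at most $n/2$ leaves, preserving the $O(\log n)$ depth. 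The delicate point is the bookkeeping: each super-bush must be re-rooted at its topmost $P$-node and the tree repaired (deleting the spurious degree-two node and reattaching its child) so that the resulting subtree is a genuine \emph{proper} union tree of the induced subgraph it spans; only then do Lemma~\ref{lmm: node encoding} and the recursion apply verbatim. I would also verify the edge cases (the leaf ending $P$, and the assignment of $A_i$ when $l_i(u)<l_i(v)$) and confirm that for every pair there is a first level $i\le p$ with $l_i(u)\ne l_i(v)$ at which the stored block settles adjacency.

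Finally, for the optimality claim, when $k$ is constant the bound reads $O(\log n)$, and this is tight up to a constant factor. The family already contains twin-free graphs, for example paths (clique-width at most $3$), in which any two distinct vertices differ in their adjacency to some third vertex; hence no two vertices may share a label, which forces $2^{L}\ge n$ and so $L\ge\log_2 n$.
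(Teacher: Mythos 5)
Your proposal is correct in outline, but note that it is really a proof of the paper's main result (Theorem \ref{thm: clique width tree}, the $O(k\log k\log n)$ adjacency labeling), and it follows that proof essentially step for step: Lemma \ref{lmm: cater} gives an $r$-path whose hanging subtrees have at most $n/2$ leaves, the small subtrees are coalesced into $O(1)$ super-bushes per level so that the index $l_i(u)$ costs $O(1)$ bits, the $O(k\log k)$-bit block $A_x(u)$ of Lemma \ref{lmm: node encoding} is attached at each of the $O(\log n)$ levels, and one recurses inside a shared (re-rooted, properness-repaired) subtree. The statement you were actually asked to prove corresponds to a different construction that the paper only sketches and never proves: instead of a path, one extracts a ``columnar'' subtree $T'$ of branching height $O(\log n)$ whose removal leaves components with $O(\sqrt{n})$ leaves, so the recursion has depth only $O(\log\log n)$ but each level spends $O(\log n)$ bits locating a node of $T'$, for a total of $O(\log n\log\log n)$. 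Your route yields the stronger $O(\log n)$ bound for bounded clique-width (hence subsumes the stated claim), at the price of the super-bush coalescing and re-rooting bookkeeping that you correctly identify as the delicate step; the columnar route would avoid that coalescing entirely but pays the extra $\log\log n$ factor. Your concluding lower-bound observation (twin-free members of the family force $L\ge\log_2 n$) matches the optimality assertion in Theorem \ref{thm: clique width tree} and is sound.
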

%
%
%

\fi

\bibliographystyle{plain}
\bibliography{ref}

\end{document}